\newtheorem{theorem}{Theorem}
\newcommand{\outlinenote}[1]{}
\providecommand{\abs}[1]{\lvert#1\rvert}
\providecommand{\bra}[1]{\ensuremath{\left\langle{#1}\right|}}
\providecommand{\ket}[1]{\ensuremath{\left|{#1}\right\rangle}}
\providecommand{\braopket}[3]{\ensuremath{\left\langle{#1}|{#2}|{#3}\right\rangle}}
\providecommand{\hilb}[1]{\ensuremath{\mathcal{H}_{#1}}}
\providecommand{\hilbtwo}[2]{\ensuremath{\hilb{#1}\otimes\hilb{#2}}}
\providecommand{\rank}[1]{\ensuremath{\textrm{rank}(#1)}}
\providecommand{\diag}[1]{\ensuremath{\textrm{diag}\{ #1 \}}}
\newcommand{\linop}[2]{\ensuremath{\mathcal{L}(#1,#2)}}
\newcommand{\tE}{\hat{E}}
\newcommand{\tF}{\hat{F}^T}
\newcommand{\tU}{\hat{U}}
\newcommand{\tp}{\hat{\psi}}
\long\def\ca#1\cb{} 
\def\HM{{\mathcal H}}
\def\ad{^\dagger }
\def\al{\alpha }
\def\ot{\otimes }
\def\ra{\rightarrow }
\begin{document}

\title{Entanglement requirements for implementing bipartite unitary operations}

\author{Dan Stahlke}
\email[Electronic address:]{dstahlke@gmail.com}

\author{Robert B. Griffiths}
\email[Electronic address:]{rgrif@andrew.cmu.edu}
\affiliation{Department of Physics, Carnegie Mellon University, Pittsburgh,
Pennsylvania 15213, USA}

\date{July 12, 2011}

\begin{abstract}
  We prove, using a new method based on map-state duality, lower bounds on
  entanglement resources needed to deterministically implement a bipartite
  unitary using separable (SEP) operations, which include LOCC (local
  operations and classical communication) as a particular case.  It is known
  that the Schmidt rank of an entangled pure state resource cannot be less than
  the Schmidt rank of the unitary.  We prove that if these ranks are equal the
  resource must be uniformly (maximally) entangled: equal nonzero Schmidt
  coefficients.  Higher rank resources can have less entanglement: we have
  found numerical examples of Schmidt rank 2 unitaries which can be
  deterministically implemented, by either SEP or LOCC, using an entangled
  resource of two qutrits with less than one ebit of entanglement.
\end{abstract}

\pacs{03.67.Ac}  

\maketitle

\section{Introduction}

\outlinenote{Explain LOCC and resource}

It is possible to carry out nonlocal quantum operations on multipartite
systems using only local quantum operations and classical communications
(LOCC) provided that the parties involved have access to a suitable entangled
state, referred to as a resource.  Given a large enough resource it is always
possible to use teleportation to send all inputs to one party, who performs
the operation and then distributes the results to the other parties using
teleportation.  In some cases it is possible to perform a nonlocal operation
with less entanglement than is required by
teleportation~\cite{PhysRevA.62.052317,PhysRevA.65.032312,PhysRevA.81.062315,PhysRevA.81.062316,PhysRevA.82.022313,PhysRevA.78.014305}.
The question then arises as to how much entanglement is really necessary in
order to implement a given nonlocal operation.

\outlinenote{What is known for nonlocal unitaries}

Our first result, that the Schmidt rank of the resource must be at least
as great as that of the unitary [Theorem~1(a)], follows
rather immediately from the fact that it is a separable (SEP) operation.
This is analogous to the result given in~\cite{PhysRevLett.89.057901}
in which probabilistic (i.e.~SLOCC) implementations are considered.
Since SEP is contained in SLOCC, our Theorem~1(a) can be seen as a consequence
of the result in~\cite{PhysRevLett.89.057901}, however we provide an
independent proof along the way to our main result.

In contrast to the probabilistic case, the deterministic
implementation of a unitary is only possible if the state meets certain
entanglement requirements.
For one thing, the entanglement of the resource must be at least as great as
the entangling power of the unitary since entanglement cannot increase under
SEP~\cite{PhysRevA.78.020304}.
It has been shown that any deterministic controlled-unitary operator on two
qubits implemented with bipartite LOCC using a resource of two entangled qubits
necessarily requires a maximally entangled resource~\cite{soeda10}.  Our paper
takes a different approach to the problem, using SEP,
and provides a proof applicable to general unitaries of arbitrary dimension.
We show that if the resource has Schmidt rank equal to that of the unitary, the
resource must be uniformly entangled in the sense that all its nonzero Schmidt
coefficients are the same [Theorem~1(b)].
These same restrictions apply to LOCC, as it is a particular case of SEP.

\outlinenote{Counter example for larger rank}

It is not hard to see that if the Schmidt rank of the resource is greater than
the Schmidt rank of the unitary, then the resource need not be uniformly
entangled (e.g.\ a larger rank resource that is majorized by a smaller rank
maximally entangled state).
We have found that it is in fact possible for such a larger rank resource to
have less entanglement than would be required for a resource of Schmidt rank
equal to that of the unitary.
We have found examples of protocols in both SEP and LOCC which deterministically
implement a controlled phase operation using less than one ebit of entanglement.
In this case the unitary has Schmidt rank two and the resource has Schmidt
rank three.
Although the nonlocal unitary protocol given in~\cite{PhysRevLett.86.544} can
with certain probability consume less than one ebit of entanglement~\footnote{
Although the protocol given in~\cite{PhysRevLett.86.544} is deterministic in
the sense that it always succeeds in a finite number of steps, it is
probabilistic in the amount of entanglement required.  For any nontrivial
unitary there is a chance that the protocol requires usage of the
$\ket{\psi_{\alpha_2}}$ state, which has one ebit of entanglement.  Thus, if
the protocol only has access to a state with less than one ebit of entanglement
there is a nonzero probability that the protocol cannot be carried out
successfully.
}, we believe
that ours is the first example of carrying out such a protocol
deterministically using less than one ebit of entanglement.

\outlinenote{A brief description of what is in each section}

The remainder of this article is organized as follows. Section~\ref{sec:SEP}
sets up the problem of bipartite deterministic implementations of unitary
operators using SEP\@.
Section~\ref{sec:duality} provides the requisite background regarding map-state
duality~\cite{ZcBn04,BnZc06} and atemporal
diagrams~\cite{PhysRevA.73.052309,Stedman200909,Cvitanovic200807}.
Our main result is proved in Sec.~\ref{sec:main} using what we believe to be a
new method based on the use of map-state duality.
In Sec.~\ref{sec:4} we consider the case of a resource of larger Schmidt rank.
There is a brief conclusion in Sec.~\ref{sec:conclusions}.
An appendix details the implementation of a controlled unitary using a
qutrit resource state of less than one ebit of entanglement.

\section{Nonlocal Unitaries Via Separable operations}
\label{sec:SEP}

We are interested in carrying out a bipartite unitary map
$U:\hilb{A}\ot\hilb{B}\ra \hilb{\bar A}\ot\hilb{\bar B}$, using as a resource
an entangled state $\ket{\psi}$ on two ancillary systems $\hilb{a}$ and
$\hilb{b}$, by means of a separable operation $\{E_k\ot F_k\}$, $k=1, 2,
\ldots$.  Here $E_k:\hilb{A}\ot\hilb{a}\ra \hilb{\bar A}$ and
$F_k:\hilb{B}\ot\hilb{b}\ra \hilb{\bar B}$ together form a product Kraus
operator. For $U$ to be unitary
it is necessary that the dimensions of the Hilbert spaces satisfy $d_Ad_B =
d_{\bar A}d_{\bar B}$, but we do not require that $d_A = d_{\bar A}$ or $d_A =
d_B$.  The separable operation must satisfy the usual closure
condition~\cite{RevModPhys.81.865}
\begin{equation}
  \label{eqn1}
	\sum_k (E_k \otimes F_k)^\dagger (E_k \otimes F_k) =
		I_A \otimes I_a \otimes I_b \otimes I_B
\end{equation}
which is depicted in Fig.~\ref{fig1}(a).

In addition, for $\ket{\Phi}$ any pure input state on $\hilb{A}\ot\hilb{B}$, the
outcome of the operation will be a pure state
\begin{align}
  \label{eqn2}
&U \Bigl(\ket{\Phi} \bra{\Phi}\Bigr) U^\dagger = 
\notag\\
 &\sum_k \Bigl(E_k \otimes F_k\Bigr) 
\Bigl(\ket{\Phi}\bra{\Phi}\otimes\ket{\psi} \bra{\psi}\Bigr) 
\Bigl(E_k \otimes F_k\Bigr)^\dagger
\end{align}
on $\hilb{\bar A}\ot\hilb{\bar B}$.  Since the protocol is assumed to be
deterministic, every term on the right side is proportional to the same pure
state and it must be the case that
\begin{equation}
  \label{eqn3}
 (E_k\ot F_k) \ket{\psi} = \al_k U,
\end{equation}
with $\al_k$ some complex number.  Note that both sides of \eqref{eqn3} are
operators acting on $\HM_A\ot\HM_B$; Fig.~\ref{fig2}(a) will help interpreting it
correctly.

The resource $\ket{\psi}$ is assumed to have a Schmidt rank of $D_\psi$, which
means it can be written in the form
\begin{equation}
  \label{eqn4}
	\ket{\psi} = \sum_{i=1}^{D_\psi} \lambda_i \ket{a_i} \otimes \ket{b_i}.
\end{equation}
for suitable orthonormal bases $\{\ket{a_i}\}$ and $\{\ket{b_i}\}$ of $\HM_a$
and $\HM_b$, with Schmidt coefficients $\lambda_i>0$ for $i\leq D_\psi$.

Similarly, the bipartite operator $U$ is assumed to have a Schmidt rank
of $D_U$, meaning that it can be written in the form~\cite{0305-4470-36-39-309}
\begin{equation}
  \label{eqn5}
	U = \sum_{i=1}^{D_U} \mu_i \mathcal{A}_i \otimes \mathcal{B}_i,
\end{equation}
where $\{\mathcal{A}_i\}$ and $\{\mathcal{B}_i\}$ are bases of the
operator spaces $\linop{\hilb{A}, \hilb{\bar A}}$ and
$\linop{\hilb{B}, \hilb{\bar B}}$, orthonormal under
the Frobenius (Hilbert-Schmidt) inner product, and $\mu_i>0$ for
$i \leq D_U$.
Equivalently, $D_U$ is the minimum number of terms needed in order
to write $U$ in the form $\sum \mathcal{C}_i \otimes \mathcal{D}_i$,
without requiring $\mathcal{C}_i$ or $\mathcal{D}_i$ to be from an
orthonormal basis.

\section{Map-State Duality and Diagrams}
\label{sec:duality}

\textit{Map-state duality}~\cite{ZcBn04,BnZc06}
plays a central role in the proof that will follow.  This is a general concept
that is sometimes referred to as reshaping or a partial transpose~\cite{ZcBn04}
and in a specific manifestation is known as the Jamio{\l}kowski or sometimes
the Choi-Jamio{\l}kowski isomorphism.
States and maps are considered to both be tensors, and when a
choice of orthonormal basis is fixed there is a natural linear relation between
bras and kets (i.e.\ $\ket{i} \leftrightarrow \bra{i}$ for all basis vectors
$\ket{i}$)~\footnote{
It is also possible to formulate map-state duality in a basis independent
manner~\cite{0305-4470-36-39-309}, however this is not necessary for
the present work.
}.

With this identification between bras and kets in place, the
bipartite state $\ket{\psi}$ on the Hilbert space $\hilbtwo{a}{b}$ can be
identified with the linear map $\psi':\hilb{b} \to \hilb{a}$ obtained by
turning kets into bras on the $\hilb{b}$ space:
\begin{equation}
\ket{\psi} = \sum_{ij} \psi_{ij} \ket{a_i} \ot \ket{b_j} \to
\psi' = \sum_{ij} \psi_{ij} \ket{a_i}\bra{b_j}
\label{eqn6}
\end{equation}
Similarly, the operators $U$, $E_k$, and $F_k$, give rise to
$U':\hilbtwo{B}{\bar B} \to \hilbtwo{A}{\bar A}$
(by turning bras into kets on $\hilb{A}$ and kets into bras on $\hilb{\bar B}$),
$E'_k:\hilb{a} \to \hilbtwo{A}{\bar A}$
(by turning bras into kets on $\hilb{A}$), and
$F'^T_k:\hilbtwo{B}{\bar B} \to \hilb{b}$
(by turning bras into kets on $\hilb{b}$ and kets into bras on $\hilb{\bar B}$),
\begin{align}
{U'}   &= \sum_ {ijmn} \braopket{\bar A_j, \bar B_n}{U}{A_i, B_m} \ket{A_i, \bar A_j} \bra{B_m, \bar B_n}, \notag \\
{E'_k} &= \sum_{ijm} \braopket{\bar A_j}{E_k}{A_i,a_m} \ket{A_i, \bar A_j} \bra{a_m}, \notag \\
F'^T_k &= \sum_{ijm} \braopket{\bar B_j}{F_k}{B_i,b_m} \ket{b_m} \bra{B_i, \bar B_j}.
\label{eqn7}
\end{align}
In the case of these three operators, \textit{map-map duality} may be
a more precise term, however we will use map-state duality to refer to any such
partial transpose.
The primed operator for $F_k$ is denoted as $F'^T_k$ in order to draw attention
to the fact that its domain and range are swapped in comparison to $E'_k$.

The equations introduced so far make use of six distinct Hilbert spaces and tensors
of various rank.  In such situations the underlying structure of equations
can be somewhat hidden when expressed using Dirac notation.  Abstract index notation is more
transparent but can become unwieldy.  For this reason we provide atemporal
diagrams, similar to those found in~\cite{PhysRevA.73.052309},
which should aid the reader in following the arguments in the text.

Operators are designated by squares or rectangular boxes.  As a matter of style, the
state $\ket{\psi}$ and its corresponding operator $\psi'$ will be represented
as a circle instead of a square.  Lines between these boxes represent tensor
contraction, and these lines are labeled by the Hilbert spaces which they
correspond to.  Open lines on the left of a diagram represent the input to
the total linear operator defined by the diagram, and open lines on the right
represent outputs.  Putting the inputs on the left means that operators are to
be applied in a left-to-right manner, opposite to how algebraic equations are
interpreted.  As has been so far described, our diagrams are to be interpreted
in exactly the same way as traditional quantum circuits as used for example
in Nielsen and Chuang~\cite{NielsenChuang200009}.
The primary difference between our diagrams and
traditional circuits is that in the latter the horizontal direction is understood
to represent the passage of time whereas our diagrams make no reference to time.
The presence of a summation symbol has the obvious meaning: the linear operator
depicted in the diagram denotes the terms of a series.
The trace or partial trace operation is just a special case of tensor contraction
and is denoted by joining the relevant spaces with a line.
The identity operator is represented by a line.
With minor changes in style our diagrams are equivalent to the atemporal
diagrams of~\cite{PhysRevA.73.052309}, and resemble other such
schemes~\cite{Stedman200909,Cvitanovic200807}.

\begin{figure}[t]
	\centering
	\includegraphics{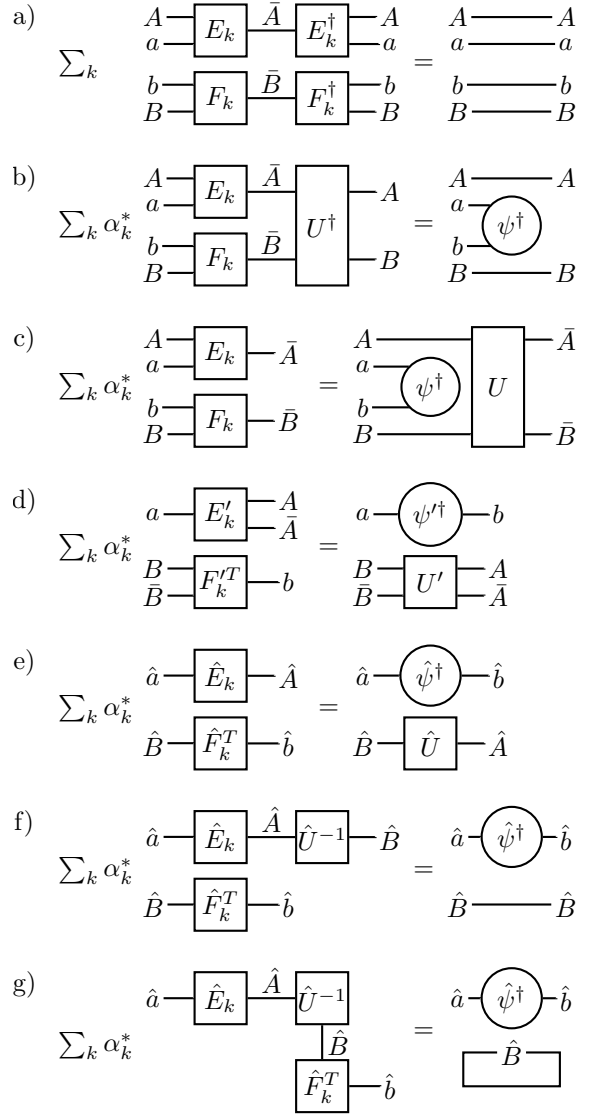}
	\caption{
		Atemporal diagrams, explained in Sec.~\ref{sec:duality}.
		(a) Closure condition, \eqref{eqn1}.
		(b) Apply $\bra{\psi}$ and simplify using the adjoint of Fig.~\ref{fig2}(a) to get \eqref{eqn9}.
		(c) Multiply on the right by $U$ to get \eqref{eqn10}.
		(d) Apply map-state duality to get \eqref{eqn11}.
		(e) Restrict spaces to supports and ranges of operators to get \eqref{eqn12}.
		(f) Multiply by $\tU^{-1}$.
		(g) Trace over $\hilb{\hat B}$ to get \eqref{eqn13}.
	}
	\label{fig1}
\end{figure}

\begin{figure}[t]
	\centering
	\includegraphics{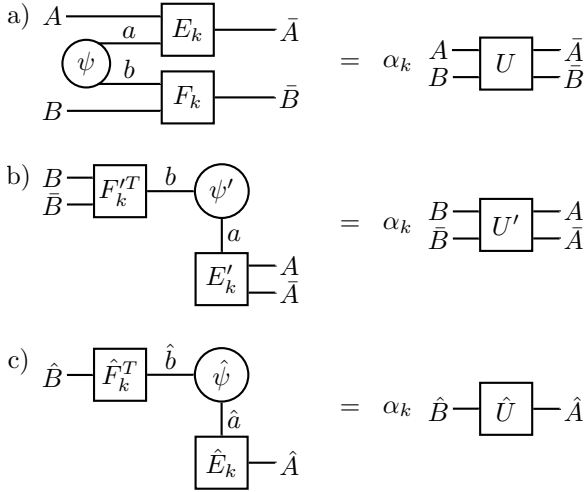}
	\caption{
		(a) Deterministic unitary operation, \eqref{eqn3}.
		(b) Apply map-state duality to get \eqref{eqn8}.
		(c) Restrict spaces to supports and ranges of operators to get \eqref{eqn14}.
	}
	\label{fig2}
\end{figure}

\section{Entanglement Requirements}
\label{sec:main}


Our main result is the following:

\outlinenote{The theorem}

\begin{theorem}
\label{thm:max_entangled}
Suppose that a unitary operator $U$ is implemented deterministically by a separable operation
that makes use of the pure state entanglement resource $\ket{\psi}$
[i.e.\ suppose that \eqref{eqn1} and \eqref{eqn3} hold].
Then

\textrm{(a)} The Schmidt rank $D_\psi$ of $\ket{\psi}$ is greater than or
equal to the Schmidt rank $D_U$ of $U$.

\textrm{(b)} If the Schmidt ranks are equal, $D_U=D_\psi$, then $\ket{\psi}$ must be a
uniformly (maximally) entangled state: all the nonzero Schmidt coefficients
are the same. 
\end{theorem}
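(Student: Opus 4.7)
The plan is to push the deterministic condition \eqref{eqn3} through map-state duality, whence part (a) drops out as a one-line rank count, and then to push the closure condition \eqref{eqn1} through the same duality along the lines outlined in Fig.~\ref{fig1}, producing an identity on $\psi'$ that, under the assumption $D_\psi=D_U$, forces its singular values --- equivalently the nonzero Schmidt coefficients of $\ket{\psi}$ --- to coincide.

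For part (a), the rules \eqref{eqn6}--\eqref{eqn7} convert the operator identity $(E_k\otimes F_k)\ket{\psi}=\alpha_k U$ into the matrix composition
\[
    E'_k\,\psi'\,F'^T_k \;=\; \alpha_k\,U',
\]
drawn in Fig.~\ref{fig2}(b). The left-hand side factors through $\psi'$, so has rank at most $\rank{\psi'}=D_\psi$; the right-hand side has rank $\rank{U'}=D_U$ whenever $\alpha_k\neq 0$. At least one such $k$ exists, since contracting \eqref{eqn1} with $\ket{\psi}\bra{\psi}$ on the $\hilb{a}\otimes\hilb{b}$ legs and applying \eqref{eqn3} yields $\sum_k|\alpha_k|^2=1$. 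Hence $D_U\le D_\psi$.

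For part (b), assume $D_\psi=D_U=D$. Equality in the rank bound forces, for each $k$ with $\alpha_k\neq 0$, the map $E'_k$ to be injective on $\img{\psi'}\subset\hilb{a}$ and $F'^T_k$ to be surjective onto $\img{\psi'^\dagger}\subset\hilb{b}$, so that restricting $E'_k,\psi',F'^T_k,U'$ to the $D$-dimensional subspaces cut out by their supports and images produces invertible $D\times D$ matrices $\tE_k,\tp,\tF_k,\tU$ with $\tE_k\,\tp\,\tF_k=\alpha_k\,\tU$ (Fig.~\ref{fig2}(c)). I would then extract a parallel identity from \eqref{eqn1} by executing the seven steps of Fig.~\ref{fig1}: (i) contract \eqref{eqn1} against $\ket{\psi}$ on its $\hilb{a}\otimes\hilb{b}$ legs and use the adjoint of \eqref{eqn3} to trade $(E_k\otimes F_k)^\dagger$ for $\alpha_k^*\,U^\dagger$; (ii) right-multiply by $U$ to kill the $U^\dagger U$ pair; (iii) apply map-state duality; (iv) restrict to the same $D$-dimensional subspaces as in Fig.~\ref{fig2}(c); (v) cancel $\tU$ using its invertibility; (vi) trace out the restricted $\hilb{\hat B}$ factor. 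The remaining identity involves only $\tp$, and I expect it to take the form $\tp\,\tp^\dagger\propto I$. Since the nonzero singular values of $\tp$ are precisely the nonzero Schmidt coefficients $\lambda_i$ of $\ket{\psi}$, this forces every $\lambda_i$ to agree, which is the content of (b).

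The main obstacle I anticipate is the bookkeeping across steps (iii)--(vi): keeping the six Hilbert spaces straight through repeated partial transposes, choosing the restricted subspaces consistently on both sides of each equation, and verifying in particular that the partial trace in (vi) really does produce a scalar multiple of the identity rather than some other positive matrix. The proportionality constant will be $\sum_k|\alpha_k|^2=1$ from the normalisation already noted in part (a).
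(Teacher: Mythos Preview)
Your plan matches the paper's proof almost step for step. The one place to tighten is what happens after your step (vi): the identity there is \emph{not} yet in $\tp$ alone. Tracing over $\HM_{\hat B}$ leaves
\[
\sum_k \alpha_k^*\,\tF_k\,\tU^{-1}\,\tE_k \;=\; D_U\,\tp^\dagger,
\]
which still carries $\tE_k,\tF_k,\tU^{-1}$; so your worry that the partial trace might not produce ``a scalar multiple of the identity'' is misdirected --- at this stage it produces $D_U\,\tp^\dagger$, and nothing has yet forced $\tp$ to be special. The paper closes the loop by returning to the restricted relation $\tE_k\,\tp\,\tF_k=\alpha_k\,\tU$ that you already recorded from Fig.~\ref{fig2}(c), inverting it (legitimate because all four maps are full rank $D$ once $D_\psi=D_U$) to get $\tU^{-1}=\alpha_k\,\tF_k^{-1}\tp^{-1}\tE_k^{-1}$, and substituting this into the displayed sum term by term. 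Each summand collapses to $|\alpha_k|^2\,\tp^{-1}$, and with $\sum_k|\alpha_k|^2=1$ one obtains $\tp^{-1}=D_\psi\,\tp^\dagger$, i.e.\ your anticipated $\tp\tp^\dagger\propto I$ with the constant pinned to $1/D_\psi$.

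This substitution is the hinge of the argument: it is what makes the closure branch (Fig.~\ref{fig1}) and the determinism branch (Fig.~\ref{fig2}) talk to each other. In your write-up the two branches are set up correctly but left sitting side by side, and the sentence ``the remaining identity involves only $\tp$'' papers over exactly the step where they are combined. Make that substitution explicit and your proof is the paper's.
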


\begin{proof}[Proof of (a)]
Making use of map-state duality and the operators defined in \eqref{eqn6} and \eqref{eqn7},
equation \eqref{eqn3} [Fig.~\ref{fig2}(a)] can be rewritten as [Fig.~\ref{fig2}(b)]
\begin{equation}
E'_k \psi' F'^T_k = \al_k U'.
\label{eqn8}
\end{equation}
Since the rank of a product of linear operators is at most the smallest of the ranks of the individual
operators, it follows that $\rank{\psi'} \ge \rank{U'}$.
The rank of an operator is equal to the number of its nonzero singular values.
Since the Schmidt decompositions \eqref{eqn4} and \eqref{eqn5} are essentially singular value
decompositions of $\psi'$ and $U'$, it is apparent that $\rank{\psi'} = D_\psi$ and $\rank{U'} = D_U$
and the inequality becomes $D_\psi \ge D_U$.  Part (a) is proved.
\end{proof}

\begin{proof}[Proof of (b)]
Apply the closure condition \eqref{eqn1} to $\bra{\psi}$
and use the adjoint of \eqref{eqn3} to obtain
\begin{equation}
  \label{eqn9}
\sum_k \al_k^* U\ad (E_k\ot F_k) = \bra{\psi} \ot I_{AB}.
\end{equation}
as shown in Fig.~\ref{fig1}(b). 
Next, multiply both sides on the left by $U$ to arrive at
\begin{equation}
  \label{eqn10}
	\sum_k \alpha_k^* (E_k \otimes F_k) = \bra{\psi} \otimes U,
\end{equation}
as shown in Fig.~\ref{fig1}(c).  Making use of map-state duality gives [Fig.~\ref{fig1}(d)]
\begin{equation}
  \label{eqn11}
	\sum_k \alpha_k^* (E'_k \otimes F'^T_k) = \psi'^\dagger \otimes U'.
\end{equation}

The map $U'$ may in general have rank less than the dimension of $\HM_B\ot\HM_{\bar B}$ or
$\HM_A\ot\HM_{\bar A}$ (which need not be equal to each other).  In this case
it will be useful to denote by $\HM_{\hat B}$ the subspace of $\HM_B\ot\HM_{\bar
  B}$ which forms the \emph{support} (or co-image or row space) of $U'$, the
orthogonal complement of its kernel (null space), and by $\HM_{\hat A}$ the
subspace of $\HM_A\ot\HM_{\bar A}$ that forms the \emph{range} (or image) of
$U'$.  Each of these subspaces has a dimension equal to $D_U$, and $U'$ is a
nonsingular (invertible) linear map of $\HM_{\hat B}$ onto $\HM_{\hat A}$,
which we hereafter denote by $\hat U$.  In the same way one can introduce
subspaces $\HM_{\hat b}$ and $\HM_{\hat a}$ of $\HM_b$ and $\HM_a$ which form
the support and range of $\psi'$, and define $\hat\psi$ to be the corresponding
nonsingular map of rank $D_\psi$ from $\HM_{\hat b}$ to $\HM_{\hat a}$.  Next,
$\hat E_k$ is $E'_k$ with its domain restricted to $\HM_{\hat a}$, which can be
strictly smaller than the support of $E'_k$, and with its range restricted to
$\HM_{\hat A}$, which could be smaller than the image of $E'_k$.  Finally,
$\hat{F}_k^T$ is $F'^T_k$ regarded as a map from $\HM_B\ot\HM_{\bar B}$ to $\HM_b$,
but with domain and range restricted to $\HM_{\hat B}$ and $\HM_{\hat b}$,
respectively~\footnote{
It is significant that we define $\hat{F}_k^T$ as $F'^T_k$ restricted to subspaces.
In general it is not the case that $\hat{F}_k$ is equal to $F'_k$ restricted to subspaces.
}.

The result of restricting \eqref{eqn11} to the subspaces just defined is
\begin{equation}
  \label{eqn12}
	\sum_k \alpha_k^* (\tE_k \otimes \tF_k) = \tp^\dagger \otimes \tU,
\end{equation}
corresponding to Fig.~\ref{fig1}(e). Multiplying on the left by $\tU^{-1}$ and tracing
over $\hilb{\hat{B}}$ gives 
\begin{equation}
  \label{eqn13}
\sum_k \alpha_k^* \tF_k \tU^{-1} \tE_k = D_U\tp^\dagger ,
\end{equation}
see Fig.~\ref{fig1}(f) and (g).  Restricting \eqref{eqn8} to subspaces results in 
\begin{equation}
  \label{eqn14}
	\tE_k \tp \tF_k = \alpha_k \tU,
\end{equation}
see Fig.~\ref{fig2}(c).
Here we have restricted the spaces over which matrix multiplications are being
performed ($\HM_{\hat b}$ and $\HM_{\hat a}$ instead of $\HM_b$ and $\HM_a$),
however equality is still maintained because the dimensions which have been
eliminated correspond to the zero Schmidt coefficients of $\ket{\psi}$, which
is to say the zero singular values of $\psi'$.

To complete the proof, make use of the assumption $D_\psi=D_U$.
Then $D_\psi$ is also the rank of
$\tE_k$ and $\tF_k$: all four operators in \eqref{eqn14} are full rank.
Taking the inverse of both sides and inserting the result for $\hat U^{-1}$ in
\eqref{eqn13} leads to the result
\begin{equation}
  \label{eqn15}
\sum_k \abs{\alpha_k}^2	\tp^{-1} = \tp^{-1} = D_\psi \tp^\dagger ,
\end{equation}
where  $\sum_k|\al_k|^2=1$  follows from  \eqref{eqn1},  \eqref{eqn3} and  the
normalization of  $\ket{\psi}$.  With $\ket{\psi}$  in Schmidt form,  $\tp$ is
diagonal,  so $\tp  =  I/\sqrt{D_\psi}$.  Therefore  all  the nonzero  Schmidt
coefficients  of $\ket{\psi}$  are equal  to $1/\sqrt{D_\psi}$.

\end{proof}


\section{Larger Rank Resource}
\label{sec:4}

\outlinenote{Issue of resource of greater rank}

We have proved that a resource that is of the smallest viable Schmidt rank must
be maximally entangled, but it is also possible to use a resource that is of
higher Schmidt rank that is not maximally entangled.
For one thing, if such a state meets an appropriate majorization criterion it
can be deterministically transformed into a maximally entangled
state~\cite{PhysRevLett.83.436}.
In this case the larger rank initial resource would have greater entanglement
than would be required if the smaller maximally entangled state had been used
in the first place.
There is however the possibility that some protocol could be devised to use a
resource of larger Schmidt rank that has less entanglement than the maximally
entangled state of smaller rank.

\outlinenote{Numerical solution in SEP}

In fact, we have numerically found examples of such constructions in both SEP
and LOCC\@.  One solution in SEP uses a resource state $\ket{\psi} = \sqrt{0.81}\ket{00}
+ \sqrt{0.095}(\ket{11}+\ket{22})$ on two qutrits to implement the two qubit controlled
unitary operator $U = \textrm{diag}\{1, 1, 1, e^{i \phi}\}$ with
$\phi = 2 \textrm{cos}^{-1}(35/36)$.  We have verified this to be an exact
solution using a computer algebra system.  This resource
constitutes less than one ebit of entanglement: the Von Neumann entropy is
approximately $0.89$ ebits.  Since entropy cannot increase under
SEP~\cite{PhysRevA.78.020304} it is necessary for the resource that is consumed
to have greater entanglement than the entangling capacity of the unitary being
implemented.  The entangling capacity of this unitary is shown
in~\cite{PhysRevA.67.012306} to be approximately $0.23$ ebits.  Since this is
much less than the $0.89$ ebits that we use, there remains the possibility that
a different construction or an even larger rank resource could potentially
lower the entanglement cost further.

\outlinenote{Numerical solution in LOCC}

We also found an LOCC protocol which, though less efficient than the
SEP construction
just described, allows one to carry out a  bipartite unitary
deterministically using a resource with less than
one ebit of entanglement.
The resource in this case is
$\ket{\psi} = \sqrt{0.8}\ket{00} + \sqrt{0.1}(\ket{11}+\ket{22})$
and the unitary implemented is
$U = \textrm{diag}\{1, 1, 1, e^{i \phi}\}$ with $\phi = 0.08 \pi$.
The Von Neumann entropy of this resource is approximately $0.92$ ebits,
and this is a four round protocol (Alice, Bob, Alice, Bob).

The constructions described above are instances of a more general continuous family of
solutions that we have found, covering a range of controlled phase operations.
As should be expected, a larger phase $\phi$ requires a larger entanglement resource.
In both the SEP and the LOCC case only certain classes of solutions were
searched for, so it is possible that a more thorough search would provide
more efficient protocols.
The details of our SEP construction are presented in Appendix~\ref{sec:sepappendix}.
Our LOCC construction consists of a long list of Kraus operators in
numerical form, which is available upon request.


\section{Conclusion}
\label{sec:conclusions}

We have shown that a unitary operator of Schmidt rank $D$ implemented as a
bipartite separable operation requires an entanglement resource of Schmidt
rank at least $D$.  If the Schmidt rank of the resource is exactly equal to
$D$, the resource must be uniformly (maximally) entangled with equal nonzero
Schmidt coefficients.   These restrictions apply also to LOCC, which is a
subset of SEP\@.  The proof uses map-state duality in a way which has not (so
far as we know) been previously applied to problems of this type, so might
have other interesting applications.

Numerical results show that the amount of entanglement required for the
resource can be lowered by using a resource of Schmidt rank larger than $D$.
A four round LOCC protocol has been found which uses a two-qutrit resource state with
less than one ebit of entanglement to implement a bipartite controlled phase
gate (albeit with a small phase).

Although some large classes of unitaries are known to have implementations
in LOCC using resources having the minimal Schmidt rank required by
Theorem~1(a)~\cite{PhysRevA.62.052317,PhysRevA.65.032312,PhysRevA.81.062315,PhysRevA.81.062316,PhysRevA.78.014305},
it is not known whether such minimal-rank implementations are possible for all
unitaries.
Given a unitary of Schmidt rank $D_U$ it is always possible to find a
collection of operators $\{E_k\otimes F_k\}$ such that \eqref{eqn9} and
\eqref{eqn3} are satisfied with a resource of Schmidt rank $D_\psi=D_U$.
But it is not known if there is a separable operation satisfying both
\eqref{eqn1} and \eqref{eqn3}.  Consequently, it is possible that some
unitaries may require a resource of greater rank than the lower bound given in
Theorem~1(a).
Even if such a minimal rank solution is always possible in SEP, it still might
not be possible in LOCC.
This stands in contrast to the case of SLOCC where it is known that any unitary
can be implemented using a state of Schmidt rank equal to that of the
unitary~\cite{PhysRevLett.89.057901}.


\begin{acknowledgments}

We thank Vlad Gheorghiu for his comments. The
research reported here was supported in part by the National Science
Foundation through Grant No.\ 0757251.
\end{acknowledgments}


\appendix
\section{Less than one ebit in SEP}
\label{sec:sepappendix}

We performed a numerical search for solutions to~\eqref{eqn1} and~\eqref{eqn3} with
the resource and unitary taking the forms
\begin{align}
\ket{\psi} &= \sqrt{c_0}\ket{00} + \sqrt{(1-c_0)/2}(\ket{11}+\ket{22}), \\
U &= \textrm{diag}\{ 1, 1, 1, e^{i \theta} \}.
\end{align}
In this case the unitary $U$ is Schmidt rank 2 and the resource is Schmidt
rank 3, so the spaces $\hilb{a}$ and $\hilb{b}$ are each 3 dimensional.
In order to reduce the search space we looked for operators $\{E_k\}$ and
$\{F_k\}$ of the form
\begin{align}
E_k = E_* S_k \textrm{\ and\ } F_k = F_* T_k
\end{align}
where $S_k : \hilb{a} \to \hilb{c}$, $T_k : \hilb{b} \to \hilb{c}$ with
$\hilb{c}$ being a two dimensional space, and
\begin{align}
E_* &=
\left(\begin{array}{rr} 1 & 0 \\ 0 & 0 \end{array}\right)_{\bar{A} A} \otimes \bra{0}_c +
\left(\begin{array}{rr} 0 & 0 \\ 0 & 1 \end{array}\right)_{\bar{A} A} \otimes \bra{1}_c,
\\
F_* &=
\left(\begin{array}{rr} 1 & 0 \\ 0 & 1            \end{array}\right)_{\bar{B} B} \otimes \bra{0}_c +
\left(\begin{array}{rr} 1 & 0 \\ 0 & e^{i \theta} \end{array}\right)_{\bar{B} B} \otimes \bra{1}_c.
\end{align}
It is possible to take advantage of the symmetry of the resource
$\ket{\psi}$ by searching for operator sets of the form
\begin{align}
\{ S_k L^l M^m N^n \} \textrm{\ and\ } \{ T_k L^l M^m N^n \}
\label{eqn:sym_st}
\end{align}
where $l, m, n \in \{0,1\}$ and $L$, $M$, and $N$ are defined by
\begin{align}
L =
	\left(\begin{array}{rrr}
	1 & 0 & 0 \\
	0 & 0 & 1 \\
	0 & 1 & 0 
	\end{array}\right),
\end{align}
$M = \diag{ 1, 1, -1 }$, and $N = \diag{ 1, -1, 1 }$.
There is no loss of generality in this assumption, since if
$(\{S_k\}, \{T_k\})$ gives a solution to~\eqref{eqn1} and~\eqref{eqn3}
then so does
$(\{\frac{1}{\sqrt{8}} S_k L^l M^m N^n\}, \{T_k L^l M^m N^n\})$.
This decreases the number of independent operators (indexed by $k$) that need
to be solved for, and in fact it turns out to be sufficient to consider
only two values of $k$.

Initially we searched for solutions with $\theta=\pi/4$ and $c_0 = 0.6$ which,
although representing more than one ebit of entanglement, is not majorized by a
fully entangled resource of Schmidt rank 2.  Once a solution was found, the
parameters were variated until a value of $c_0$ was reached which represented a
resource of less than one ebit of entanglement.  Further constraints were added
and variations made to simplify the solution and identify relations between the
parameters.  A family of solutions was found of the form \eqref{eqn:sym_st} with
\begin{align}
S_0 &=
\left(\begin{array}{rrr}
p & 1 & -p \\
e^{i \theta/2} & -p & -1
\end{array}\right),
\\
S_1 &=
\left(\begin{array}{rrr}
-1 & 1 & -p \\
-p e^{i \theta/2} & p & 1
\end{array}\right),
\\
T_0 &=
\left(\begin{array}{rrr}
-x-y & * & * \\
(x-y) e^{-i \theta/2} & * & *
\end{array}\right),
\\
T_1 &=
\left(\begin{array}{rrr}
-x+y & * & * \\
(x+y) e^{-i \theta/2} & * & *
\end{array}\right),
\\
p &= \sqrt{\frac{1-s}{s}},
\\
s &= x^2 \left(1-cos\left(\theta/2\right)\right) + 
     y^2 \left(1+cos\left(\theta/2\right)\right),
\end{align}
where the parameters $x$, $y$, $c_0$, and $\theta$ must be solved for
numerically.  The asterisks in $T_0$ and $T_1$ represent parameters that can be
found using the relation $S_k \psi' T^T_k = I/4$.

A sequence of solutions for $x$, $y$, $c_0$, and $\theta$ were fed into an
inverse symbolic calculator of our own design which uses a lookup table to
convert floating point numbers into algebraic expressions.  One of these
solutions produced particularly simple algebraic expressions:
\begin{align}
x &= 9/5, \\
y &= -3/5, \\
c_0 &= 0.81, \\
\theta &= 2 \textrm{arccos}(35/36).
\end{align}
With this algebraic solution in hand, we used the computer algebra package
Sage~\cite{sage} to verify that this indeed represented an exact (not just
approximate to within floating point precision) solution to~\eqref{eqn1}
and~\eqref{eqn3}.


\end{document}